\def\BibTeX{{\rm B\kern-.05em{\sc i\kern-.025em b}\kern-.08em
    T\kern-.1667em\lower.7ex\hbox{E}\kern-.125emX}}
 \newtheorem{problem}{Problem}
 \newtheorem{theorem}{Theorem}
 \newtheorem{lemma}{Lemma}
 \newcommand{\path}{\ensuremath{\mathit{path}}\xspace}
\newcommand{\spara}[1]{\smallskip\noindent{\bf #1}}
\newcommand{\graph}{\ensuremath{G}\xspace}
\newcommand{\nodes}{\ensuremath{V}\xspace}
\newcommand{\tree}{\ensuremath{T}\xspace}
\newcommand{\bigO}{\ensuremath{\mathcal{O}}\xspace}
\newcommand{\sharpP}{\ensuremath{\#\mathbf{P}}\xspace}
\newcommand{\set}[1]{\left\{#1\right\}}
\newcommand{\abs}[1]{{\left|#1\right|}}
\newcommand{\proba}[1]{\text{Pr}\left({#1}\right)\xspace}
\newcommand{\algcp}{\ensuremath{\textsc{\large cycle\-pop\-ping}}\xspace}
\newcommand{\algcut}{\ensuremath{\textsc{\large trim}}\xspace}
\newcommand{\alglerw}{\ensuremath{\textsc{\large lerw}}\xspace}
\newcommand{\terminals}{\ensuremath{X}\xspace}
\newcommand{\obs}{\ensuremath{O}\xspace}
\newcommand{\obsuninfected}{\ensuremath{U}\xspace}
\newcommand{\treeroot}{\ensuremath{r}}
\newcommand{\laplacian}{\ensuremath{L}\xspace}
\newcommand{\rsptree}{\ensuremath{T}\xspace} 
\newcommand{\rsttree}{\ensuremath{\tree_X}} 
\newcommand{\rsptrees}{\ensuremath{\mathcal{S}}\xspace}
\newcommand{\rsptreest}{\ensuremath{\rsptrees_{\rsttree}}\xspace}
\newcommand{\parExp}[2]{\underset{#1}{\mathbb{E}}\left[#2\right]\xspace} 
\newcommand{\cg}{\ensuremath{\chainG_c}\xspace}
\newcommand{\cgraph}{\ensuremath{\cg[\rsttree]}\xspace} 
\newcommand{\cnodes}{\ensuremath{V_c}\xspace} 
\newcommand{\cedges}{\ensuremath{\tilde{E}_c}\xspace} 
\newcommand{\outNeighbor}[1]{\tilde{N}_{\mathit{out}}(#1)\xspace} 
\newcommand{\inNeighbor}[1]{\tilde{N}_{\mathit{in}}(#1)\xspace} 
\newcommand{\supernode}{\ensuremath{v_{c}}\xspace}
\newcommand{\chainG}{\ensuremath{\tilde{G}}\xspace} 
\newcommand{\chainE}{\ensuremath{\tilde{E}}\xspace} 
\newcommand{\Nin}{\ensuremath{N_{\mathit{in}}}\xspace} 
\newcommand{\Nout}{\ensuremath{N_{\mathit{out}}}\xspace} 
\newcommand{\nextnode}{\ensuremath{\text{Next}}}
\newcommand{\intree}{\ensuremath{\text{InTree}}}
\newcommand{\ourmethod}{\textit{Tree-sampling}\xspace}
\newcommand{\pagerank}{\textit{PageRank}\xspace}
\newcommand{\mintree}{\textit{Min-Steiner-tree}\xspace}
\newcommand{\netfill}{\textit{NetFill}\xspace}
\definecolor{darkred}{rgb}{0.5793618007033479,0.042445213537590176,0.07361784131795752}
\definecolor{myblue}{rgb}{0.12156862745098039, 0.47058823529411764, 0.7058823529411765}
\definecolor{myyellow}{rgb}{1.0, 0.8509803921568627, 0.1843137254901961}
\newcommand{\truepos}{\tikz\draw[darkred,fill=darkred] (0,0) circle (0.76ex);\xspace}
\newcommand{\falsepos}{\tikz\draw[darkred,fill=darkred] (0,0) circle (0.5ex);\xspace}
\newcommand{\falseneg}{\tikz\draw[gray,fill=white] (0,0) circle (0.76ex);\xspace}
\newcommand{\obsnode}{\tikz\draw[gray,fill=myblue] (0,0) rectangle (1.5ex, 1.5ex);\xspace}
\newcommand{\hiddeninfnode}{\tikz\draw[gray,fill=myyellow] (0,0) circle (0.76ex);\xspace}
\newcommand{\uninfnode}{\tikz\draw[gray,fill=white] (0,0) circle (0.5ex);\xspace}
\newcommand{\ap}{\ensuremath{\mathit{AP}}\xspace}
\newcommand{\grqc}{{\texttt{grqc}}\xspace}
\newcommand{\infectious}{{\texttt{infectious}}\xspace}
\newcommand{\email}{{\texttt{email-univ}}\xspace}
\newcommand{\student}{{\texttt{student}}\xspace}
\newcommand{\lattice}{{\texttt{lattice}}\xspace}
\newcommand{\digg}{{\texttt{Digg}}\xspace}
\newcommand{\ic}{\textit{IC}\xspace}
\newcommand{\si}{\textit{SI}\xspace}
\newcommand{\sir}{\textit{SIR}\xspace}
\newcommand{\mindistroot}{\textit{Min-dist}\xspace}
\newcommand{\pagerankroot}{\textit{Page\-Rank}\xspace}
\newcommand{\trueroot}{\textit{True-root}\xspace}
\newcommand{\nosemic}{\renewcommand{\@endalgocfline}{\relax}}
\newcommand{\dosemic}{\renewcommand{\@endalgocfline}{\algocf@endline}}
\newenvironment {squishlist}
{\begin{list}{$\bullet$}
		{ \setlength{\itemsep}{0pt}
			\setlength{\parsep}{3pt}
			\setlength{\topsep}{3pt}
			\setlength{\partopsep}{0pt}
			\setlength{\leftmargin}{1.5em}
			\setlength{\labelwidth}{1em}
			\setlength{\labelsep}{0.5em} } }
	{\end{list}}
\begin{document}

\title{Robust Cascade Reconstruction by\\ Steiner Tree Sampling}


\author{
\IEEEauthorblockN{Han Xiao}
\IEEEauthorblockA{\textit{Aalto University} \\
Espoo, Finland \\
han.xiao@aalto.fi}
\and
\IEEEauthorblockN{Cigdem Aslay}
\IEEEauthorblockA{\textit{Aalto University} \\
Espoo, Finland \\
cigdem.aslay@aalto.fi}
\and
\IEEEauthorblockN{Aristides Gionis}
\IEEEauthorblockA{\textit{Aalto University} \\
Espoo, Finland \\
aristides.gionis@aalto.fi}
}

\maketitle

\begin{abstract}
We consider a network where an infection has taken place
and a subset of infected nodes has been partially observed.
Our goal is to reconstruct the underlying cascade that is likely to have generated these observations. 
We reduce this cascade-reconstruction problem to computing the marginal probability 
that a node is infected given the partial observations, which is a \sharpP-hard problem. 
To circumvent this issue, we resort to estimating infection probabilities 
by generating a sample of probable cascades, which span the nodes that have already been observed to be infected, 
and avoid the nodes that have been observed to be uninfected.
The sampling problem corresponds to sampling directed Steiner trees with a given set of terminals, 
which is a problem of independent interest 
and has received limited attention in the literature.
For the latter problem we propose two novel algorithms with provable guarantees 
on the sampling distribution of the returned Steiner trees.

The resulting method improves over state-of-the-art approaches that often make explicit assumptions
about the infection-propagation model, or require additional parameters.
Our method provides a more robust approach to the cascade-reconstruction problem, 
which makes weaker assumptions about the infection model, 
requires fewer additional parameters, 
and can be used to estimate node infection probabilities. We experimentally validate the proposed reconstruction algorithm on real-world graphs with both synthetic and real cascades. We show that our method outperforms all other baseline strategies in most cases.

\end{abstract}

\begin{IEEEkeywords}
infection cascades, epidemics, cascade reconstruction, Steiner tree sampling
\end{IEEEkeywords}

\section{Introduction}

Diffusion processes have been  used to model the spread of an item in a network, 
such as rumors, trends, or infections.
The study of diffusion processes has a wide range of applications, 
from designing strategies for viral marketing to preventing and controlling outbreaks.
In this paper we consider the problem of reconstructing a cascade, 
based on the partially observed outcome of a propagation process in a network. 

Reconstructing a cascade, which may have resulted from the propagation of an infection or information, has many important applications, and thus, 
several variants of the problem have been studied in the literature~\cite{rozenshtein2016reconstructing, braunstein2016network, sundareisan2015hidden, xiao2018reconstructing}. In this work, we consider the task of {\em probabilistic cascade reconstruction}. Given the partially observed infection state of the network, our aim is to reconstruct the infection cascade by inferring the hidden infection state of the nodes with no observation. We map our cascade reconstruction problem to the task of computing the marginal probability that a node is infected given partial observations. We show that computing these marginal probabilities is \sharpP-hard and resort to approximation by sampling cascades from a target distribution. The sampled cascades are required to span the nodes that have been observed to be infected, and avoid the nodes that have been observed to be uninfected. We map this sampling problem to the problem of sampling directed Steiner trees with a given set of terminals. 
The latter problem is of independent interest, and to our knowledge 
it has received limited attention in the literature.

Our main technical contribution is two novel algorithms with provable guarantees 
on the distribution of the sampled Steiner trees.
The first algorithm, \algcut, uses Wilson's algorithm~\cite{wilson1996generating} to sample spanning trees then ``trims'' the branches that are beyond the terminal nodes, with proper adjustment of the sampling probabilities so that the resulting Steiner tree is sampled from the desired target distribution. 
The second algorithm, loop-erased random walk (\alglerw), 
is an adaptation of Wilson's algorithm~\cite{wilson1996generating} adjusted to take into account the input set of terminal nodes and directly return a Steiner tree from the desired target distribution. 

\newlength{\figwidth}
\begin{figure}[t!]
  \centering
  \newcommand{\includefig}[1]{\includegraphics[width=0.31\linewidth]{figs/overview/#1}}
  \setlength{\figwidth}{4.0cm}
  \setlength{\tabcolsep}{1.5pt}
  \def\arraystretch{1.5}
  \begin{tabular}{ccc}
    \includefig{cascade.pdf} &  \includefig{trees.pdf} & \includefig{inf-probas.pdf} \\
    {\small $(a)$ Input} & {\small $(b)$ Tree sampling} &  {\small $(c)$ Inference} \\
    \multicolumn{3}{c}{\includegraphics[width=2\figwidth]{{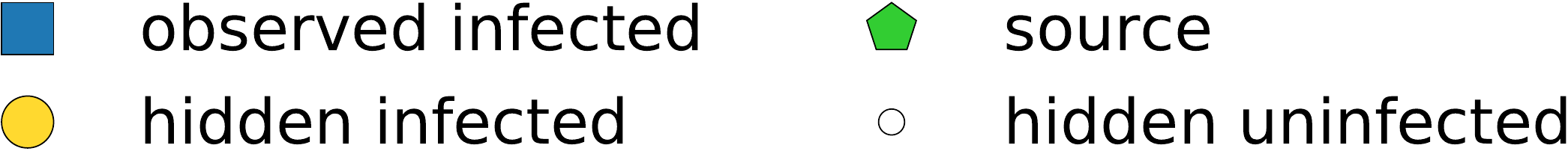}}} \\
    \multicolumn{3}{c}{\includegraphics[width=1.2\figwidth]{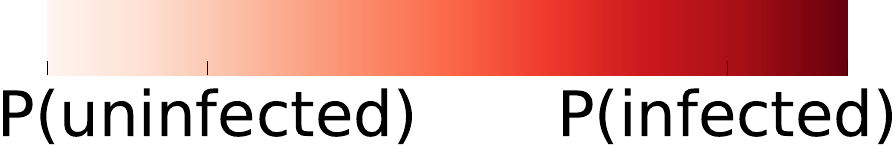}} 
  \end{tabular}  
  \caption{Overview of our approach:
  $(a)$ given input observed infections;
  $(b)$ Steiner trees are sampled, each representing a possible cascade explaining the observation;
  $(c)$ node infection probabilities are estimated from the samples;}
  \label{fig:query-illus}
\end{figure}


An illustration of our setting and approach is shown in Figure~\ref{fig:query-illus}.
To simplify visualization, the example is shown on a 2-$d$ grid, 
however, the idea is general for any network. 
Figure~\ref{fig:query-illus}$(a)$ shows the ground-truth cascade.
Among the infected nodes some are observed (blue squares~\obsnode) while most are unknown (yellow circles~\hiddeninfnode). 
The uninfected nodes (small white circles~\uninfnode) are also unknown.
Figure~\ref{fig:query-illus}$(b)$ shows four different cascades sampled from the grid. 
The sampled cascades are modeled as Steiner trees spanning a set of terminals --- 
the observed infected nodes. 
Using the sampled cascades we estimate the infection probabilities 
for all the nodes whose infection status is unknown ---
shown in Figure~\ref{fig:query-illus}$(c)$.

Our main contribution are as follows:
\begin{itemize}
\item We define the probabilistic cascade-reconstruction problem (Section~\ref{sec:problem}), which makes weaker assumptions compared to methods such as NetFill~\cite{sundareisan2015hidden}, thus offering more robustness. 
\item To solve the cascade-reconstruction problem, we study the problem of sampling Steiner trees with a given set of terminals, and propose two algorithms with provable guarantees on the sampling distribution (Section~\ref{section:sampling}). 
\item We empirically demonstrate that our sampling-based cascade-reconstruction approach outperforms other baselines in terms of both node and edge recovery (Section~\ref{section:experiments}). 
\end{itemize}

%

\label{sec:intro}

\section{Related work}

\noindent
{\bf Cascade reconstruction.}
Diffusion processes have been studied in the literature extensively under different viewpoints. 
Nevertheless, the problem of ``reverse engineering'' an epidemic 
so as to reconstruct the most likely cascade, 
and the most likely infected nodes, or propagation edges, has received relatively little attention.
Lappas et al.~\cite{lappas2010finding} study the problem of 
identifying $k$ seed nodes, or \textit{effectors}, 
in a partially-activated network, 
which is assumed to be in steady-state under the independent-cascade model ({\ic}). 
The problem has also been considered in different settings. 
For example, recent works by Rozenshtein et al.~\cite{rozenshtein2016reconstructing} 
and Xiao et al.~\cite{xiao2018reconstructing}
consider different cascade-reconstruction problems in a temporal setting, 
where interaction timestamps, or infection times, respectively, are assumed to be known.

A related line of research focuses on finding the source node (root) of a given cascade, 
or a number of cascades.
Shah and Zaman~\cite{shah:11:culprit} formalize the notion of \textit{rumor-centrality} 
to identify the single source node of an epidemic under the susceptible-infected model ({\si}), 
and provide an optimal algorithm for $d$-\textit{regular trees}.
The more recent study of Farajtabar et al.~\cite{farajtabar2015back} considers the problem of identifying a single source given multiple partially-observed cascades.
In our case, we focus on detecting the unobserved infected nodes. 

The work closest to ours is the work by Sundareisan et al.~\cite{sundareisan2015hidden}.
Similar to our problem formulation, they aim at recovering hidden infections under a non-temporal setting where the source nodes are unknown.
However, they make two key assumptions. First, their method NetFill is tailored specifically for the \si model.
Second, they assume that the fraction of observed infections is known.
In contrast, our method does make such assumptions, making it more robust. 
On the other hand, their method also attempts to recover the infection sources while our method relies on external source-inference methods. However, we empirically show that the integration of our method  with simple distance-based source-inference heuristics can still significantly outperform NetFill. Finally, while NetFill  makes binary decisions, our method produces infection probability estimates: tackling this problem probabilistically suits more the stochastic nature of a variety of diffusion processes. 


\smallskip
\noindent
{\bf Sampling trees.} 
Sampling combinatorial structures has long been an active research area.
In particular, 
sampling random spanning trees~\cite{mosbah1999non, broder1989generating, wilson1996generating, propp1998get} is an extensively-studied problem. 
On the other hand, the problem of sampling Steiner trees has not received much attention in the literature.
Related work on Steiner trees has mainly considered the optimization aspect, such as minimum Steiner tree problem and its variants\cite{hauptmann2015compendium}.

\label{sec:related}

\section{Problem definition}
\label{section:problem}

\noindent{\bf Infection cascade model.} We consider a reciprocal non-symmetric directed contact graph $G = (V,E,p)$. 
By reciprocal we mean that $(u,v) \in E$ implies $(v,u) \in E$. 
We assume that under an infection-propagation model, a dynamic process takes place in the network, and 
for each edge $(u,v)\in E$, $p_{uv}$ denotes the probability that $u$ transmits the infection to $v$. 
These infection probabilities are non-symmetric, meaning that in the general case $p_{uv} \not= p_{vu}$.
The infection process starts at an initial source node $s$, i.e., patient-zero, 
which could potentially be unknown. 
We denote by $T^*$ the resulting cascade 
and the corresponding set of infected nodes by $V[T^*]$. We assume each node can be infected by only one other node and infected nodes do not recover.\footnote{These assumptions are aligned with \si and \ic models while ruling out Linear Threshold~\cite{kempe2003maximizing} and \sir models.} Under these assumptions, the cascade $T^*$ forms a tree. 

\noindent{\bf Observation model.} For each node $u \in V$ we use the variable $y_u \in \{0,1\}$  to denote its infection state such that $y_u = 1$ if $u \in V[T^*]$ and $0$ otherwise. We assume that we can only partially observe the infection state of the network and define the set of observations $\obs=\set{(u, y_u)}$ as the set of nodes with known infection states. We use $X$ and $\obsuninfected$ to refer to the nodes that are reported to be infected and uninfected, respectively. Thus, abusing the notation slightly, for all $u \in V \setminus \obs$, the value of $y_u$ is unknown. We use $\proba{y_u = 1 \mid \obs}$ to denote the probability that a node $u \in V \setminus \obs$ is infected when we are given the partial observation set $\obs$. 

\noindent{\bf Probabilistic cascade reconstruction.} Given a partially observed infection propagation process, our aim is to reconstruct the infection cascade by inferring the hidden infection state of the unreported nodes.  Thus, we consider the following problem.

\begin{problem}[Probabilistic cascade reconstruction]
\label{problem:prob-cascade-reconstruction}
Given a reciprocal non-symmetric directed contact graph $G = (V,E,p)$, 
and a set $\obs=\set{(u, y_u)}$ of partial observations, reconstruct the infection cascade by inferring the hidden infection state of the unreported nodes. 
\end{problem}

\noindent{\bf Complexity of Problem~\ref{problem:prob-cascade-reconstruction}.} 
We now show that, 
computing the probability $\proba{y_u = 1 \mid \obs}$  that a node $u \in V \setminus \obs$ is in infected state is a \sharpP-hard problem. First, notice that an infection can propagate only via the infected nodes in a graph. Hence, we can simply handle the observations on uninfected nodes by removing such nodes and their adjacent edges from the graph as these nodes can neither become infected nor transmit infection. Thus, for a given $u$, to compute $\proba{y_u = 1 \mid \obs}$ in $G = (V,E,p)$, we can simply assign $V = V \setminus U$ and $E = E \setminus \{(u,v) \cup (v,u)\}$, for all $u \in U$. We provide our complexity analysis based on this. 

\begin{lemma}
\label{eq:problem1Complexity}
Given a probabilistic graph $G=(V,E,p)$, 
a partially observed infection cascade $T^*$ with 
 source $s$, and 
the corresponding subset $X \subseteq V[T^*]$ of partially observed infected nodes, 
computing $\proba{y_u = 1 \mid \obs = X}$ for any $u \in V \setminus X$ is 
\sharpP-hard. 
\end{lemma}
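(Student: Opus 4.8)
The plan is to reduce from the classical two-terminal reachability-reliability problem on directed graphs, which Valiant showed to be $\sharpP$-hard: given a directed graph $G_0 = (V_0, E_0)$ with independent edge-retention probabilities $\{p_e\}_{e \in E_0}$ and two distinguished vertices $s, t$, compute the probability $\Pr[s \leadsto t]$ that $t$ is reachable from $s$ in the random subgraph obtained by keeping each edge $e$ independently with probability $p_e$. Throughout I would work with the \ic reading of the propagation model (one-shot transmission across each edge), which is one of the models the paper targets; under this reading the marginals are genuinely hard, whereas for a time-unbounded \si process every node reachable via positive-probability edges is infected almost surely and the marginals become trivial.

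The structural fact I would invoke is the standard live-edge coupling of the \ic process: the set $V[T^*]$ of eventually-infected nodes of a cascade rooted at $s$ has the same law as the set of nodes reachable from $s$ in the random subgraph in which each arc $(a,b)$ is ``live'' independently with probability $p_{ab}$; which particular parent infects a given node --- that is, the tree shape of $T^*$ --- is irrelevant to the marginal event $\{y_u = 1\}$. Hence, taking the observation set $\obs = X = \{s\}$ (which conditions only on the sure event that patient-zero is infected, and so imposes no conditioning at all), we get $\proba{y_u = 1 \mid \obs = \{s\}} = \Pr[s \leadsto u]$ in the contact graph.

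The reduction is then nearly immediate. From the reliability instance $G_0$ I would build a reciprocal contact graph $G$ by keeping every arc of $E_0$ with its probability and, for each $(a,b) \in E_0$ whose reverse $(b,a)$ is missing, adding $(b,a)$ with probability $0$; I then set the source to $s$, put $\obs = X = \{s\}$, let $T^*$ be any cascade consistent with observing $s$ infected, and query $u = t \notin X$. Since no live path can traverse a zero-probability arc, reachability of $t$ from $s$ in $G$ coincides with reachability in $G_0$, so $\proba{y_t = 1 \mid \obs = X} = \Pr[s \leadsto t]$; this is a polynomial-time many-one reduction and yields $\sharpP$-hardness.

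I expect the only delicate point to be the handling of zero probabilities, should the model require every $p_{uv}$ to be strictly positive. In that case I would instead give the added reverse arcs a small weight $\varepsilon > 0$, observe that $\proba{y_t = 1 \mid \obs = X}$ is a polynomial in $\varepsilon$ of degree at most $|E_0|$ (expand the live-edge sum and group the subgraphs by how many reverse arcs they use), evaluate the hypothetical oracle at $|E_0| + 1$ distinct positive values of $\varepsilon$, and interpolate to recover the value at $\varepsilon = 0$ --- turning the argument into a Turing reduction, which still establishes $\sharpP$-hardness. The rest is routine: the heart of the statement, namely that the required marginal equals a directed reachability-reliability, is handed to us directly by the live-edge view of \ic.
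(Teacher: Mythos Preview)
Your proposal is correct and follows essentially the same route as the paper: both reduce from Valiant's directed $s$--$t$ reachability/reliability problem by taking a trivial observation set (the paper uses $\obs=\emptyset$, you use $\obs=\{s\}$, which are equivalent since $s$ is infected with certainty) and identifying the marginal $\proba{y_t=1\mid\obs}$ with the live-edge reachability probability. You are in fact more careful than the paper on two points it glosses over---enforcing the reciprocal-graph requirement via zero-probability (or $\varepsilon$-weight plus interpolation) reverse arcs, and noting that the hardness applies to the \ic reading rather than time-unbounded \si---but the core argument is the same.
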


\begin{proof}
We prove this by a reduction from the counting problem of $s$-$t$ connectedness in a directed graph~\cite{valiant1979complexity}. Given a directed graph $G = (V,E)$ and two nodes $s,t \in V$, $s$-$t$ connectedness problem is to count the number of subgraphs of $G$ in which there is a directed path from $s$ to $t$. We reduce this problem to the problem of computing marginal infection probabilities as follows. Let $\obs = \emptyset$ and let $T$ denote a random infection cascade with infection source $s$. For any random tree $T$,  let $\mathbbm{1}{[t \in V[T]]}$ denote the indicator random variable that equals $1$ if $t \in V[T]$ for a given node $t$. It is easy to see that 
\begin{eqnarray*}
\proba{y_t = 1 \mid \obs = \emptyset}
& = & \parExp{T \sim \mathcal{T}}{\mathbbm{1}{[(t \in V[T])]}} \\ 
& = & \sum_{T \sqsubseteq \mathcal{T}} \proba{T} \mathbbm{1}{[t \in V[T]]} \\ 
& = & \sum_{g \sqsubseteq G} \proba{g} \path_g(s, t),
\end{eqnarray*}
where $g$ is a possible world sampled from graph $G$ by keeping each edge $(u,v) \in E$ with probability $p_{uv}$ and $\path_g(s,t)$ is an indicator variable that equals $1$ if $t$ is reachable via a path from $s$ in $g$. Therefore, for computing $\proba{y_t = 1 \mid \obs = \emptyset}$, we solve the $s$-$t$ connectedness counting problem. It is shown in ~\cite{valiant1979complexity} that $s$-$t$ connectedness is \sharpP-complete, and thus computing $\proba{y_t = 1 \mid \obs = \emptyset}$ is \sharpP-hard. 
\end{proof}

Given the \sharpP-hardness of our problem, we resort to Monte-Carlo sampling 
for estimating $\proba{y_u = 1 \mid \obs = X}$, 
for all $u \in V \setminus X$. 
To do so, we generate a sample of probable cascades that span $X$ 
by sampling directed Steiner trees from the set of all directed Steiner trees on $G$ with terminals $X$,
with probability proportional to their likeliness to happen. 
Thus, the operational definition for the cascade-reconstruction problem 
we consider in this paper is the following.

\begin{problem}[Random Infection Tree]
\label{problem:random-infection}
Given a weighted directed contact network $G = (V,E,p)$, 
a root node $r \in V$, and a set $\terminals \subseteq \nodes$ of terminal nodes, 
sample a directed Steiner tree $\rsttree$, rooted at $r$,
with terminal nodes \terminals, and with probability proportional to $Pr(\rsttree) \propto \prod\limits_{\substack{(u, v) \in \rsttree}} p_{uv}.$
\label{prob:rooted}
\end{problem}


%

\label{sec:problem}

\section{Random Steiner-tree sampling}
\label{section:sampling}

Given a symmetric directed network $G=(V,E,p)$, 
a set of terminal nodes $X \subseteq V$, 
and a root node $r \in V$, 
our aim is to sample a random tree $\rsttree$ from the graph $G$ containing the terminal nodes $X$.
Our sampling approach is based on a random walk defined on $G$.
Since $G = (V,E,p)$ is not necessarily stochastic 
we need to define the Markov chain $\chainG = (V, \chainE, w)$, 
over the graph $G$, 
for which we perform the random walk. 

The random walk for the sampler is performed on the {\em reverse} direction of the edges of~$G$.
In other words the edge set of the Markov chain $\chainG$ is defined as $\chainE = E^T$, 
that is, the transpose of the edges of $G$.\footnote{We simply define the chain from the transposed graph as we will be using anti-arborescence model of random-walk based tree samplers.} 

Let $\Nin(u)$ denote the set of in-neighbors  of a node $u$ in~$G$, 
and let $p(u) = \sum_{v \in \Nin(u)} p_{vu}$ denote the weighted in-degree of $u$.
We also denote by $\Nout(u)$ the set of out-neighbors  of $u$ in~$G$.
Furthermore, we denote by $\inNeighbor{u}$ and $\outNeighbor{u}$ 
the sets of in-neighbors and out-neighbors, respectively, in the Markov chain \chainG.
To ensure that the Markov chain $\chainG$ is stochastic 
we set 
$w(u,v) = p_{vu}/p(u)$
so that the weighted out-degree 
$w(u) = \sum_{v \in \Nin(u)} w(u,v) = \sum_{v \in \outNeighbor{u}} w(u,v)$
is equal to $1$ for all the nodes $u$ in $\chainG$.
Our tree sampling method performs a random walk on $\chainG$ according to edge weights $w(u,v)$.
%
%
%
Let 
\[
w(\rsttree) = \prod\limits_{\substack{(u, v) \in \rsttree}} w(u,v)
\]
denote the weight of a Steiner tree $\rsttree$. 
Assume that we are able to sample a Steiner tree $\rsttree$, from the Markov chain $\chainG$, 
with probability proportional to the weight of the tree $w(\rsttree)$.
Notice that after normalization of the weights, 
we have 
\begin{align}
\label{eq:weightNormBias}
\prod\limits_{\substack{(u, v) \in \rsttree}} p_{uv} = w(\rsttree) \prod_{u \in V[\rsttree] \setminus \{r\}} p(u).
\end{align}

Thus, to solve Problem~\ref{prob:rooted}, 
we need to be able to sample a Steiner tree $\rsttree$ from $\chainG$
with probability proportional to $w(\rsttree)$:
once this is achieved, we use \emph{sampling importance resampling}~\cite{tears}
(SIR) to correct for the difference. 
SIR aims at sampling $x$ from \textit{target distribution} $\pi(x)$ via two steps: 
1) sampling from proposal distribution $q(x)$ and
2) resampling the samples with probability proportional to $\frac{\pi(x)}{q(x)}$. 
In our case, we resample each Steiner tree returned by the proposed algorithms with probability proportional to $\left( \prod_{u \in V[\rsttree] \setminus \{r\}} p(u) \right)$ so as to achieve the target  distribution, and thus solving Problem~\ref{prob:rooted}. We address next the problem of sampling a Steiner tree with probability proportional to its weight. 

\begin{problem}[Random Steiner Tree]
\label{problem:random-tree}
Given a Markov chain $\chainG = (V, \chainE, w)$, 
a root node $r \in V$, and a set of terminal nodes $\terminals \subseteq V$,
sample a Steiner tree $\rsttree$, rooted at $r$,
with terminal nodes \terminals, and with probability proportional to 
\[    
w(\rsttree) = \prod\limits_{\substack{(u, v) \in \rsttree}} w(u,v).
\]
\label{prob:rootedChain}
\end{problem}

The difference between Problems~\ref{problem:random-infection} and~\ref{problem:random-tree}, 
is that Problem~\ref{problem:random-tree} assumes that the input graph is a Markov chain, 
i.e., the probabilities out of each node sum up to 1.
A general input probabilistic graph, can be turned to Markov chain 
using the weighting scheme transformation outlined above.

Notice that since $G$ is reciprocal (non-symmetric) directed, 
the graph $\chainG$ is strongly connected, hence, the chain is irreducible. 
However, it is not time-reversible due to the non-symmetricity of the transition probabilities. 
In the next sections, we propose two sampling algorithms that can produce 
non-uniformly distributed Steiner trees from irreducible but not time-reversible Markov chains. 

\subsection{TRIM algorithm}
\label{sec:cutSampling}
Our first algorithm, \algcut, 
receives as input the edge weights of the Markov chain \chainG, 
a root $r\in V$ and a set of terminal nodes $X\subseteq V$.
\algcut first samples a random spanning tree $\rsptree$ with root $r$ from the chain 
\chainG and then trims $\rsptree$ into a Steiner tree $\rsttree$ rooted at $r$, 
such that all the leafs of $\rsttree$ are terminal nodes.  

Before we proceed to present the details of the algorithm \algcut,
we first discuss how to sample a random spanning tree for a given root $r$. 
This is a classic problem studied extensively in the literature~\cite{broder1989generating,mosbah1999non,wilson1996generating}. 
Note that in our case the chain $\chainG$ is not time-reversible. 
Hence, we cannot use Broder's algorithm~
\cite{broder1989generating} as it can only operate on time-reversible Markov chains 
defined on unweighted undirected graphs. 
Similarly, the algorithm by Mosbah and Saheb~\cite{mosbah1999non} is devised for reversible chains 
defined on weighted but undirected graphs hence is also not suitable for our setting. 

Thus, to generate a random tree~$\rsptree$ from a weighted directed graph~$\chainG$, 
with probability distribution different than the uniform, 
we use Wilson's Random\-Tree\-With\-Root algorithm~\cite{wilson1996generating}, 
which works on any irreducible chain that is not necessarily time-reversible. 

Now we give the details of  the \algcut algorithm. 
The pseudocode is provided in Algorithm~\ref{alg:cut}. 
The algorithm first samples a spanning tree $\rsptree$ with root $r$. 
Then, the sampled spanning tree $\rsptree$ is given as input to the {TrimTree} procedure, 
which removes all the subtrees that are below each terminal node $x \in X$, 
which then become {\em leafs} in the tree $\rsptree$.
The trimmed tree $\rsttree$ is returned as the output of \algcut. 
It is clear that all terminal nodes $X$ are contained in $\rsttree$, 
and also that the set of leafs of $\rsttree$ is a subset of $X$.

\begin{algorithm}[t!]
\caption{\algcut}
\label{alg:cut}
\Indm
{\small
\SetKwInOut{Input}{Input}
\SetKwInOut{Output}{Output}
\SetKwComment{tcp}{//}{}
\Input{Markov chain $\chainG = (V,\chainE,w)$, root $r$, set of terminals $\terminals$}
 \Output{Random Steiner tree $\rsttree$}
}
\Indp
\BlankLine
$\rsptree \gets \text{{RandomTreeWithRoot}(r)}$ \;
$\rsttree \gets \text{{TrimTree}}(\rsptree)$ \;
\end{algorithm}

Before we start analyzing the distribution of the sampled Steiner trees $\rsttree$ returned by \algcut, 
we first introduce some concepts used in our analysis.

We use $\rsptrees(\chainG)$ to denote the set of all directed spanning trees of $\chainG$ rooted at~$r$. 
Given $\rsttree$, we use $\rsptreest(\chainG) \subseteq \rsptrees(\chainG)$ 
to denote the set of spanning trees of $\chainG$, with root $r$, 
in which $\rsttree$ is contained as a subgraph. 

For any weighted directed chain $\chainG = (V, \chainE, w)$, with $V = \{v_1, \cdots, v_n\}$, we define its Laplacian matrix $\laplacian(\chainG)$ as follows: 
\[
  \laplacian_{ij}(\chainG) = 
  \begin{cases} 
   \sum_{v_i \in \inNeighbor{v_j}} w(v_i,v_j)& \text{if } i = j \\
   -w(i,j)       & \text{if } i \neq j \text{ and } (v_i,v_j) \in \chainE \\
   0 & \text{otherwise}
   
  \end{cases}
\]

Given $\chainG$ and a Steiner tree $\rsttree$, 
we denote the \emph{contracted graph} of $\rsttree$ on $\chainG$ as 
$\cgraph = (\cnodes, \cedges, w_c, \ell_c)$. 
The contraction of $\rsttree$ on $\chainG$ merges all the nodes of $\rsttree$ into a supernode $\supernode$. 
Thus, the node set of $\cgraph$ is given by $\cnodes = (V \setminus V[\rsttree]) \cup \{\supernode\}$. 
The contraction may create parallel edges from a node $u \in V \setminus V[\rsttree]$ to $\supernode$ 
if $\lvert \outNeighbor{u} \cap V[\rsttree] \rvert > 1$, 
where $\outNeighbor{u}$ denotes the out-neighbors of $u$ in $\chainG$. 
For each $u \in V \setminus V[\rsttree]$ such that $\lvert \outNeighbor{u} \cap V[\rsttree] \rvert > 1$, 
we merge the parallel edges into a single edge $(u,\supernode)$, 
set its weight $w_c(u, \supernode) = \sum_{v \in \outNeighbor{u} \cap V[\rsttree]} w(u,v)$, 
and set its label $\ell_c(u,\supernode) = \lvert \outNeighbor{u} \cap V[\rsttree]\rvert $. 
Similarly, for each $u \in V \setminus V[\rsttree]$ such that $\inNeighbor{u} \cap V[\rsttree] > 1$, 
we merge the parallel edges from $\supernode$ to $u$ into a single edge $(\supernode, u)$, 
set its weight $w_c(\supernode, u) = \sum_{v \in \inNeighbor{u} \cap V[\rsttree]} w(v,u)$, 
and set its label $\ell_c(\supernode, u) = \lvert \inNeighbor{u} \cap V[\rsttree] \rvert$. 
For the rest of the edges, we use their original weights from $\chainG$ and set their labels to $1$, 
i.e., for each $u,v \in V \setminus V[\rsttree]$, we set $w_c(u,v) = w(u,v)$ and $\ell_c(u,v) = 1$.

Let $\rsptrees(\cgraph)$ denote the spanning trees of $\cgraph$ with root $\supernode$. 
The following lemma depicts the relation between $\rsptrees(\cgraph)$ and $\rsptreest(\chainG)$. 

\begin{lemma}
\label{lemma:treeSurjection}
Given a weighted directed graph $\chainG = (V, \chainE, w)$, 
a Steiner tree $\rsttree$ of $\chainG$, 
and the contracted graph $\cgraph$ of $\rsttree$ on $\chainG$, 
define a structure-preserving mapping $f: \rsptreest(\chainG) \mapsto \rsptrees(\cgraph)$, 
where for a given tree $T \in \rsptreest(\chainG)$, 
a tree $f(T)$ is obtained by contracting $\rsttree$ on $T$.
The function $f$ is surjective and any spanning tree $T_c$ in $\rsptrees(\cgraph)$ is the image of  
$\prod_{(u,v) \in T_c} \ell_c(u,v)$ spanning trees in $\rsptreest(\chainG)$. 
Moreover
\[
\sum_{T \in f^{-1}(T_c)} \dfrac{w(T)}{w(T_X)} = w(T_c).
\]
\end{lemma}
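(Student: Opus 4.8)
The plan is to analyze the structure-preserving mapping $f$ carefully, establishing the three claims in order: surjectivity, the fiber size $\prod_{(u,v)\in T_c}\ell_c(u,v)$, and the weight identity. First I would set up the bijective correspondence at the level of edges: a spanning tree $T \in \rsptreest(\chainG)$ contains $\rsttree$ as a subgraph, so $T$ consists of the $|V[\rsttree]|-1$ edges of $\rsttree$ together with exactly one outgoing edge (in $\chainG$, which is anti-arborescence oriented) from each node $u \in V \setminus V[\rsttree]$, chosen so that no cycle is created among the non-tree nodes. Contracting $\rsttree$ to the supernode $\supernode$ turns each such outgoing edge $(u,v)$ with $v \in V[\rsttree]$ into the contracted edge $(u,\supernode)$, and leaves edges between two non-tree nodes untouched; the acyclicity/spanning condition is preserved because $\cgraph$ collapses exactly the already-connected part. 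This shows $f(T) \in \rsptrees(\cgraph)$ is well-defined.

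For surjectivity and the fiber count, I would argue that given any $T_c \in \rsptrees(\cgraph)$, reconstructing a preimage amounts to choosing, for each edge of $T_c$, a corresponding edge in $\chainG$: an edge $(u,v)$ with $u,v \in V\setminus V[\rsttree]$ is forced (it has label $\ell_c(u,v)=1$), while an edge $(u,\supernode)$ can be ``un-contracted'' to any of the $\ell_c(u,\supernode) = |\outNeighbor{u}\cap V[\rsttree]|$ original edges from $u$ into $\rsttree$. Every such choice, combined with the fixed edge set of $\rsttree$, yields a valid spanning tree in $\rsptreest(\chainG)$ (the added edge lands inside the already-spanned supernode component, so no cycle forms and the result spans $V$), and distinct choices give distinct trees; hence $|f^{-1}(T_c)| = \prod_{(u,v)\in T_c}\ell_c(u,v)$ and in particular $f^{-1}(T_c)\neq\emptyset$, giving surjectivity. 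The edges of $T_c$ with $\supernode$ as an \emph{in}-neighbor (label from $\inNeighbor{u}\cap V[\rsttree]$) don't actually occur in an anti-arborescence rooted at $\supernode$ since the root has no outgoing tree edge, so only the out-labels matter; I should double-check the orientation convention here, but it does not affect the count.

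For the weight identity, I would compute $w(T)$ for $T \in f^{-1}(T_c)$ by splitting the product over edges: $w(T) = w(T_X)\cdot \prod_{(u,v)\in T,\, u\notin V[\rsttree]} w(u,v)$, so $w(T)/w(T_X)$ is exactly the product of weights of the un-contracted non-tree edges. Summing over the fiber, the choices at distinct edges of $T_c$ are independent, so the sum factors as a product over edges of $T_c$ of $\sum_{\text{choices for that edge}} w(\text{chosen edge})$. For a non-tree-to-non-tree edge this sum is just $w(u,v) = w_c(u,v)$; for a contracted edge $(u,\supernode)$ it is $\sum_{v\in \outNeighbor{u}\cap V[\rsttree]} w(u,v) = w_c(u,\supernode)$ by the definition of $w_c$. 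Multiplying these gives $\prod_{(u,v)\in T_c} w_c(u,v) = w(T_c)$, as claimed.

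The main obstacle I anticipate is getting the orientation bookkeeping exactly right — being careful that ``spanning tree rooted at $r$'' in $\chainG$ means an anti-arborescence (every non-root node has exactly one outgoing edge toward the root), so that the un-contraction of $T_c$ really is a free choice at each edge with no global consistency constraint, and that the contracted labels that enter the count are precisely the $\outNeighbor{}$-side labels. Once that is pinned down, the surjectivity, fiber-count, and weight-sum claims all follow from the same edge-by-edge independence argument, and the remaining steps are routine.
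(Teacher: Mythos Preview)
Your proposal is correct and follows essentially the same approach as the paper: both argue surjectivity via an ``expansion'' (un-contraction) process in which edges $(u,\supernode)$ admit $\ell_c(u,\supernode)$ independent lifts while all other edges lift uniquely, yielding the fiber count $\prod_{(u,v)\in T_c}\ell_c(u,v)$. Your treatment is in fact more explicit than the paper's in two respects: you spell out the factorization $w(T)=w(T_X)\cdot\prod_{u\notin V[\rsttree]}w(u,\cdot)$ and the distributive step that gives the weight identity (the paper simply says ``it follows''), and you correctly note that edges out of $\supernode$ cannot appear in an anti-arborescence rooted at $\supernode$, so only the $\outNeighbor{}$-side labels enter the count --- a point the paper leaves implicit.
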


\begin{proof}
To show that $f$ is surjective, we need to show that: 
(\emph{i}) each $\rsptree \in \rsptreest(\chainG)$ maps to a unique spanning tree $T_c \in \rsptrees(\cgraph)$ and 
(\emph{ii}) each spanning tree $T_c \in \rsptrees(\cgraph)$ is the image of at least one spanning tree in $\rsptreest(\chainG)$. 

Regarding (\emph{i}) it is clear that for each $\rsptree \in \rsptreest(\chainG)$, 
$f(\rsptree)$ is a unique spanning tree with root $\supernode$ in $\cgraph$. 
The existence and uniqueness hold by construction. 

Regarding (\emph{ii}) given a spanning tree $T_c \in \rsptrees(\cgraph)$, 
we define an ``expansion'' process, which is the reverse process of contracting $\rsttree$ on $\chainG$: 
we first replace the contracted node $\supernode$ in $T_c$ with $\rsttree$. 
For each $(u, \supernode) \in \cgraph$ such that $\ell_c(u, \supernode) = 1$, we can directly replace the edge $(u, \supernode) \in \cgraph$ with the corresponding original edge $(u ,v) \in \chainE$, where $\{v\} = \outNeighbor{u} \cap V[\rsttree]$. On the other hand, for each $(u, \supernode) \in \cgraph$ such that $\ell_c(u, \supernode) > 1$, we cannot simply insert the original edges in $\chainG$ that go from $u$ to the nodes of $T_X$ since we would then have $\ell_c(u, \supernode)$ paths from node $u$ to root $r$ hence the expansion process would not create a valid tree. 
Thus, for each $(u, \supernode)$ in $T_c$ with $\ell_c(u, \supernode) > 1$, 
the expansion process should perform $\ell_c(u, \supernode)$ different assignments, 
where in each assignment, a different $(u,v) : v \in \outNeighbor{u} \cap V[\rsttree]$ is selected to replace $(u, \supernode)$. 
Hence, for a given $T_c$, 
the number of feasible assignments that create valid spanning trees are given by $\prod_{(u,v) \in T_c} \ell_c(u,v)$. 

Finally, given the structure-preserving surjection and the way we assign the weights in $\cgraph$, 
it  follows that for any $T_c \in \rsptrees(\cgraph)$, we have 
\[
w(T_c) = \sum_{T \in f^{-1}(T_c)} \dfrac{w(T)}{w(T_X)}.
\] 
\end{proof}

\begin{theorem}
Algorithm~\ref{alg:cut} {\em (}\algcut{\em )} returns a random Steiner tree $\rsttree$ rooted at $r$ with probability proportional to $w(\rsttree) \cdot \det(\laplacian_r(\cgraph))$ where $\det(\laplacian_r(\cgraph))$ is the determinant of the Laplacian matrix of $\cgraph$ with row and column $r$ removed. 
\label{theorem:cut-proba}
\end{theorem}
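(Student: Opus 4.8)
The plan is to decompose \algcut into its two stages --- a random sampling step and a deterministic post-processing step --- and to track how the sampling distribution transforms. Write $p(\rsttree)$ for the probability that \algcut outputs the Steiner tree \rsttree. For the first stage, Wilson's \textsc{RandomTreeWithRoot}~\cite{wilson1996generating} applied to the irreducible chain \chainG returns a spanning tree (anti-arborescence) \rsptree rooted at $r$ with probability exactly $w(\rsptree)/Z$, where $Z = \sum_{\rsptree' \in \rsptrees(\chainG)} w(\rsptree')$; this is the guarantee of Wilson's algorithm on an arbitrary irreducible chain, and by the weighted directed Matrix-Tree theorem (the Markov-chain tree theorem) applied to the Laplacian \laplacian defined above, $Z = \det(\laplacian_r(\chainG))$ --- a normalization constant that does not depend on \rsttree. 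For the second stage, the TrimTree procedure is a \emph{deterministic} map $\tau$ sending each spanning tree of \chainG rooted at $r$ to a Steiner tree with terminal set $X$ rooted at $r$. Hence, denoting by $\tau^{-1}(\rsttree)$ the set of spanning trees that trim to a fixed \rsttree,
\[
p(\rsttree) \;=\; \frac{1}{Z}\sum_{\rsptree \in \tau^{-1}(\rsttree)} w(\rsptree),
\]
and it remains to show that $\sum_{\rsptree \in \tau^{-1}(\rsttree)} w(\rsptree) = w(\rsttree)\cdot\det(\laplacian_r(\cgraph))$.

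The key combinatorial step is the fiber identification $\tau^{-1}(\rsttree) = \rsptreest(\chainG)$: the spanning trees that trim to \rsttree are exactly those containing \rsttree as a subgraph. One inclusion is immediate, since trimming deletes only edges. For the other, if $\rsttree \subseteq \rsptree$, then because \rsttree contains every terminal and (by the remark following Algorithm~\ref{alg:cut}) all leaves of a trimmed tree are terminals, the part of \rsptree left intact by trimming coincides with the union of the \rsptree-paths from the terminals to $r$, which, using $\rsttree \subseteq \rsptree$ and uniqueness of tree paths, equals \rsttree exactly; hence $\tau(\rsptree) = \rsttree$. Given this, Lemma~\ref{lemma:treeSurjection} provides a surjection $f \colon \rsptreest(\chainG) \to \rsptrees(\cgraph)$ whose fibers partition $\rsptreest(\chainG)$ and satisfy $\sum_{T \in f^{-1}(T_c)} w(T) = w(\rsttree)\,w(T_c)$ for every $T_c \in \rsptrees(\cgraph)$. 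Summing over $T_c$,
\[
\sum_{\rsptree \in \rsptreest(\chainG)} w(\rsptree) \;=\; \sum_{T_c \in \rsptrees(\cgraph)}\ \sum_{T \in f^{-1}(T_c)} w(T) \;=\; w(\rsttree)\sum_{T_c \in \rsptrees(\cgraph)} w(T_c).
\]

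Finally, I apply the weighted directed Matrix-Tree theorem once more, now to the contracted graph \cgraph with the supernode \supernode (the class containing $r$) as root, obtaining $\sum_{T_c \in \rsptrees(\cgraph)} w(T_c) = \det(\laplacian_r(\cgraph))$, the determinant of the Laplacian of \cgraph with the row and column of \supernode deleted (written $\laplacian_r$ upon identifying $r$ with its class). Here one notes that \cgraph may contain parallel edges, but these were merged into single weighted edges when \cgraph was defined, so the weighted Matrix-Tree theorem applies verbatim --- the multiplicity labels $\ell_c$ relating these edges back to \chainG have already been absorbed into Lemma~\ref{lemma:treeSurjection}. Combining the three displays yields $p(\rsttree) = w(\rsttree)\,\det(\laplacian_r(\cgraph))/Z \propto w(\rsttree)\cdot\det(\laplacian_r(\cgraph))$, and as a sanity check, summing this over all \rsttree recovers $Z$, since the fibers $\tau^{-1}(\rsttree)$ partition $\rsptrees(\chainG)$.

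I expect the fiber identification $\tau^{-1}(\rsttree) = \rsptreest(\chainG)$ to be the main obstacle: it hinges on the precise behaviour of the trimming operation and on the fact that all leaves of a trimmed tree are terminal nodes, so I would state and verify it as a short lemma before assembling the probability computation. The two invocations of the Matrix-Tree theorem and the resummation through Lemma~\ref{lemma:treeSurjection} are then routine.
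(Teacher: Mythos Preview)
Your proposal is correct and follows essentially the same route as the paper: invoke Wilson's guarantee, sum over the fiber $\rsptreest(\chainG)$, resolve the sum via Lemma~\ref{lemma:treeSurjection}, and close with Tutte's Matrix-Tree theorem on \cgraph. The only substantive difference is that you spell out and justify the fiber identification $\tau^{-1}(\rsttree)=\rsptreest(\chainG)$, whereas the paper silently writes $\proba{\rsttree}=\sum_{\rsptree\in\rsptreest(\chainG)}\proba{\rsptree}$ as its first line; your extra care there is warranted, and the explicit normalization constant $Z$ and the final sanity check are harmless additions but not needed for the proportionality claim.
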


\begin{proof}
First, notice that the algorithm
Random\-Tree\-With\-Root$(r)$ of Wilson~\cite{wilson1996generating}, 
returns a random spanning tree $\rsptree$ with probability proportional to $w(\rsptree)$. 
Then, using Lemma~\ref{lemma:treeSurjection} we have: 
\begin{align*}
\proba{\rsttree} &= \sum_{\rsptree \in \rsptreest(\chainG)} \proba{\rsptree} \\
                 &\propto \sum_{\rsptree \in \rsptreest(\chainG)} w(\rsptree) \\
                 &= \sum_{T_c \in \rsptrees(\cgraph)} \sum_{T \in f^-1(T_c)} w(T) \\ 
                 &= w(\rsttree) \cdot  \sum_{T_c \in \rsptrees(\cgraph)} w(T_c) \\
                 &= w(\rsttree) \cdot \det(\laplacian_r(\cgraph))
\end{align*}
where the last line follows from Tutte's Matrix Tree Theorem for weighted directed graphs~\cite{tutte1984graph}. 
\end{proof}

Theorem~\ref{theorem:cut-proba} implies that the \algcut algorithm 
returns a Steiner tree with probability proportional 
to its weight among all the Steiner trees that contain the same set of Steiner nodes: 
let $\rsttree$ and $T'_X$ be two Steiner trees returned by the \algcut algorithm, 
such that $V[\rsttree] = V[T'_X]$. 
Then, the contracted graph of $\rsttree$ on $\chainG$ and the contracted graph of $T'_X$ on $\chainG$ will be identical, 
which means that the sum of the weights of the spanning trees of $\cgraph$ with root $\supernode$, i.e., the term $\sum_{T_c \in \rsptrees(\cgraph)} w(T_c)$, will be the same for both $T_X$ and $T'_X$. 
On the other hand, our aim is to sample a Steiner tree with probability proportional to its weight among all the Steiner trees 
(that could contain different set of Steiner nodes). 
Thus, to remove the bias introduced by the term $\sum_{T_c \in \rsptrees(\cgraph)} w(T_c)$, 
we employ importance resampling~\cite{tears} 
and resample each tree returned by the \algcut algorithm 
with probability proportional to $\left(\sum_{T_c \in \rsptrees(\cgraph)} w(T_c)\right)$.

\spara{Special case: unweighted graphs.} Consider the special case when $\chainG$ is unweighted, i.e.,  we perform a simple random walk on $\chainG = (V, \chainE)$. Then, it follows from Tutte's Matrix Tree Theorem for directed graphs~\cite{tutte1984graph} that $\lvert \rsptrees(\chainG) \rvert = \det(\laplacian_r(\chainG))$ where the Laplacian of unweighted directed $\chainG$ is given by: 
\[
  \laplacian_{ij}(\chainG) = 
  \begin{cases} 
   \lvert \inNeighbor{v_j} \rvert & \text{if } i = j \\
   -1      & \text{if } i \neq j \text{ and } (v_i,v_j) \in \chainE \\
   0 & \text{otherwise}
   
  \end{cases}
\]
Then, the \algcut algorithm returns a Steiner tree $\rsttree$ rooted at $r$ with probability equal to
\begin{align*}
\proba{\rsttree} &= \sum_{\rsptree \in \rsptreest(\chainG)} \proba{\rsptree} \\
                    &= \dfrac{\lvert \rsptreest(\chainG) \rvert}{\vert \rsptrees(\chainG) \rvert} \\
                    &= \dfrac{\sum_{T_c \in \rsptrees(\cgraph)} \prod_{(u,v) \in T_c} \ell_c(u,v)}{\det(\laplacian_r(\chainG))}
\end{align*}
where the last line follows from Lemma~\ref{lemma:treeSurjection}. 

\subsection{{Loop-erased random walk algorithm}}
\label{sec:lerwSampling}
In this section, we propose a more efficient algorithm, based on loop-erased random walk (\alglerw), 
which is a more direct adaptation of the Random\-Tree\-With\-Root algorithm~\cite{wilson1996generating}.
The \alglerw algorithm can directly sample a Steiner tree $\rsttree$ 
from the chain $\chainG$ with probability proportional to its weight $w(\rsttree)$. 

The pseudocode for algorithm \alglerw is provided in Algorithm~\ref{alg:lerw}. 
\alglerw maintains a \emph{current tree} $\rsttree$, which initially consists of just the root $r$. 
As long as there are terminal nodes that are not in the tree, 
\alglerw performs a random walk in $\chainG$ starting from such $v \in X \setminus V[\rsttree]$, 
erasing loops (i.e., cycles) as they are created, 
until the walk encounters a node that is in the current tree $T_X$. 
This cycle-erased trajectory is then added to the current tree. 
The algorithm terminates once all the terminal nodes are added to the current tree $\rsttree$. 

\IncMargin{1em}
\begin{algorithm}[t!]
\Indm
{\small
\SetKwInOut{Input}{Input}
\SetKwInOut{Output}{Output}
\SetKwComment{tcp}{//}{}
\Input{Markov chain $\chainG = (V,\chainE,w)$, root $r$, terminals $\terminals$}
 \Output{Random Steiner tree $\rsttree$}
}
\Indp
\BlankLine
\For{$i \leftarrow 1 \ldots n$}{
  $\nextnode[i] \leftarrow false$\;
}
$\nextnode[\treeroot] \leftarrow nil$\;
$\intree[\treeroot] \leftarrow true$\;
\For{$i \in \terminals$}{
  $u \leftarrow i$\;
  \While{not $\intree[u]$}{
    $\nextnode[u] \leftarrow \mathrm{RandomSuccessor}(u)$\;
    $u \leftarrow \nextnode[u]$\;
  }
  $u \leftarrow i$\;
  \While{not $\intree[u]$}{
    $\intree[u] \leftarrow true$ \Comment*[r]{erase the loop}\
    $u \leftarrow \nextnode[u]$\;
  }
  \Return{$\nextnode$}\;
}
\caption{\alglerw}
\label{alg:lerw}
\end{algorithm}
\DecMargin{1em}

\begin{theorem}
The \alglerw algorithm 
returns a random Steiner tree $\rsttree$ rooted at~$r$ with probability proportional to $w(\rsttree)$. 
\label{theorem:lerw}
\end{theorem}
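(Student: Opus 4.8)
The plan is to prove the claim by recasting \alglerw inside Wilson's cycle-popping framework~\cite{wilson1996generating} and then showing that the loop-erasure normalization factors out into a quantity that does not depend on the returned tree. First I would attach to every non-root vertex $v$ an independent stack of arrows, where each arrow points to a successor of $v$ drawn with probability $w(v,\cdot)$; the top arrows of the stacks define a functional graph on $V$, and \emph{popping} a cycle means advancing the stacks of all of its vertices by one. Loop-erased walks are exactly cycle-popping: the inner \textbf{while} loops of \alglerw, started from a terminal, follow top arrows and pop whatever cycle they close, halting upon meeting the current tree $\rsttree$. In these terms \alglerw is precisely Wilson's sampler with the walk driven only from the terminals in $\terminals$ and stopped as soon as every terminal is joined to $r$, so that only the vertices lying on the surviving cycle-erased trajectories are committed to $\rsttree$ and no structure is imposed on the rest of $V$.

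The second step is an order-independence lemma in the spirit of Wilson's analysis: the collection of cycles popped and the final Steiner tree $\rsttree$ are determined by the stacks alone and are invariant under the order in which the terminals of $\terminals$ are processed. I would establish this by the standard colored-cycle exchange argument --- a cycle that is poppable now stays poppable later, and interchanging two consecutive pops leaves the outcome unchanged --- adapted to the fact that we stop once the terminals are connected to $r$ rather than spanning all of $V$. This guarantees that $\proba{\rsttree = t}$ is well defined for each feasible realization $t$ and may be evaluated by processing the terminals in whatever order is analytically most convenient.

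The decisive step is to compute $\proba{\rsttree = t}$ for a fixed Steiner tree $t$ rooted at $r$ whose leaves lie in $\terminals$. Writing the event $\{\rsttree = t\}$ as ``the surviving top arrows on $V[t]$ spell out the edges of $t$, while the popped cycles account for all discarded excursions,'' the probability factors as the edge-weight product $\prod_{(u,v)\in t} w(u,v) = w(t)$ times a sum over all admissible collections of cycles that the terminal-driven walks can pop before terminating. The objective is then to show that this cycle contribution is the same constant for every feasible $t$, so that it cancels in the normalization and leaves $\proba{\rsttree = t} \propto w(t)$, as claimed.

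I expect this last factorization to be the main obstacle, and it is exactly the point at which \alglerw must be distinguished from \algcut. In Theorem~\ref{theorem:cut-proba} the analogous cycle contribution for \algcut collapses, via Lemma~\ref{lemma:treeSurjection} and the Matrix-Tree theorem, to $\det(\laplacian_r(\cgraph))$, a quantity that depends on the Steiner node set $V[t]$ because \algcut grows a full spanning tree and hence also commits the vertices of $V\setminus V[t]$. The heart of the present argument is therefore to verify that the collections of cycles generated by the terminal-only walks of \alglerw carry no such $V[t]$-dependent factor. The route I would take is to exhibit a weight-preserving bijection between the admissible cycle collections arising for different feasible trees $t$, using that \alglerw never fixes the arrows on $V\setminus V[t]$, so that the summed cycle weight is a single graph-level constant; making this bijection exact --- and thereby ruling out a surviving node-set-dependent term of the $\det(\laplacian_r(\cgraph))$ type --- is the crux of the proof.
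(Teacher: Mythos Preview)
Your proposal is correct and takes essentially the same route as the paper: both recast \alglerw as terminal-driven cycle popping, invoke Propp--Wilson order-independence, and factor $\proba{\rsttree=t}$ into $w(t)$ times a cycle contribution that is then argued to be independent of $t$. The paper dispatches the step you flag as the crux in one line by citing the independence of the popped-cycle collection and the returned tree from~\cite{propp1998get}, whereas you propose to verify it by an explicit bijection; your own observation---that \alglerw never fixes the post-popping arrows on $V\setminus V[t]$---is exactly why the admissible cycle collections coincide for every feasible $t$ (so the bijection is in fact the identity), and is precisely what distinguishes \alglerw from \algcut, as you note.
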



To prove this result, we will use an adaptation of the cycle-popping algorithm, 
devised by Wilson~\cite{wilson1996generating}, 
which was shown to be equivalent to loop-erased random walk when sampling a spanning tree. 

Given $\chainG$ and root $r$, 
following Wilson~\cite{wilson1996generating}, 
we associate to each non-root node $u$ an \emph{infinite} stack 
\[
S_{u} = [S_{u,1}, S_{u,2}, \cdots],
\] 
whose elements are states of the chain $\chainG$ such that 
\[
\proba{S_{u,i} = v} = \proba{\mathrm{RandomSuccessor}(u) = v},
\] 
for all $u$ and $i$ and such that all the items $S_{u,i}$ are jointly independent of one another.  We refer to the left-most element $S_u$ as the element that is at the \emph{top} of stack $S_u$, and by popping the stack $S_u$ we mean removing the top element from $S_u$. At any moment, the top elements of the stacks define a directed stack graph $\graph_\treeroot$ such that $\graph_\treeroot$ contains an edge $(u,v)$ only if the top element of $S_u$ is $v$. 
If there is a directed cycle $C$ in $\graph_\treeroot$, we pop $S_u$ for every $u \in C$, 
which creates another stack graph with cycle $C$ removed.

\begin{algorithm}[t]
\Indm
{\small
\SetKwInOut{Input}{Input}
\SetKwInOut{Output}{Output}
\SetKwComment{tcp}{//}{}
\Input{Markov chain $\chainG = (V,\chainE,w)$, root $r$, terminals $\terminals$}
 \Output{Random Steiner tree $\rsttree$}
}
\Indp
\BlankLine
  \While{$\graph_\treeroot$ has a cycle involing any item in $\terminals$}{
    pop any such cycle off the stack\;
  }
  $\rsttree \leftarrow $ the edges that are traversed on $\graph_\treeroot$ starting from each terminal to the root\;
  \Return{$\rsttree$}\;
  \caption{$\algcp$ procedure}
  \label{alg:cycle-popping}
\end{algorithm}

Wilson showed that when the cycle-popping process terminates, 
the stack graph $\graph_\treeroot$ is a directed spanning tree rooted at $r$~\cite{wilson1996generating}. 
Our aim instead is to obtain a Steiner tree: 
this translates to an early termination of the cycle-popping procedure 
as soon as the cycles do not contain any terminal node, 
thus, defining a Steiner tree on top of the stacks. 
The pseudocode of the cycle-popping procedure 
for sampling a Steiner tree is provided in Algorithm~\ref{alg:cycle-popping}. 

\begin{lemma}
$\algcp$ terminates with probability $1$ 
and the stack graph $\graph_\treeroot$ upon termination is a Steiner tree. 
\label{lemma:st}
\end{lemma}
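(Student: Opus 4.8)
The plan is to mirror Wilson's analysis of cycle popping~\cite{wilson1996generating}, adjusting it to the early-stopping rule that only pops cycles still obstructing a terminal. Throughout, note that in the stack graph $\graph_\treeroot$ every non-root node has out-degree exactly one and $r$ has out-degree zero, so $r$ is its unique sink; hence following the stack-graph edges forward from any node either arrives at $r$ or enters a directed cycle. Call a cycle of $\graph_\treeroot$ \emph{obstructing} if it is reachable along forward edges from some terminal in $\terminals$. The while loop of Algorithm~\ref{alg:cycle-popping} pops an obstructing cycle as long as one exists, so it halts exactly when the forward path of every terminal reaches $r$.

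For termination, I would invoke Wilson's result that, for a finite irreducible chain (our $\chainG$ is strongly connected, hence irreducible), cycle popping terminates with probability one, and --- what we actually use --- that almost surely there is \emph{no} infinite sequence of legal pops, irrespective of the order in which cycles are chosen. Every pop performed by $\algcp$ is a legal pop of a cycle present in the current stack graph, so the (a priori possibly infinite) sequence of pops it carries out is one such sequence; almost surely it is therefore finite, and $\algcp$ terminates. Equivalently, one can appeal to the loop-erased-walk view (Algorithm~\ref{alg:lerw}): each of the $\lvert\terminals\rvert$ walks reaches $r$, and hence the current tree, in finite time almost surely by irreducibility.

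It remains to check that the returned edge set $\rsttree$ --- the forward paths in $\graph_\treeroot$ from every terminal to $r$ --- is a Steiner tree rooted at $r$ with terminal set $\terminals$. At termination no obstructing cycle remains, so each of these forward paths reaches $r$, and each is simple (a repeated vertex would create a cycle and the path would never reach $r$). Their union $\rsttree$ is connected (all paths end at $r$), acyclic (a cycle in $\rsttree$ would be a cycle of $\graph_\treeroot$ reachable from a terminal), and has out-degree one toward $r$ everywhere except at $r$, so it is a tree rooted at $r$ containing $\terminals$. Finally, if $v\in V[\rsttree]$ is a leaf it has no incoming edge in $\rsttree$; but any non-terminal node on a forward path $x\to\cdots\to v\to\cdots\to r$ is preceded on that path by some node (the path starts at the terminal $x\neq v$), so $v$ cannot be a non-terminal leaf. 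Hence all leaves of $\rsttree$ lie in $\terminals$, completing the claim.

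The main obstacle is the termination argument: the early-stopped process may halt while $\graph_\treeroot$ still contains cycles, so a.s.-finiteness cannot be read off a ``keep popping until nothing is left'' statement directly; it has to be derived from the stronger fact (part of Wilson's cycle-popping lemma) that no infinite popping sequence exists almost surely, regardless of order. The remaining steps are routine reasoning about out-degree-one (functional) graphs.
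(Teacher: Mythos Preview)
Your proof is correct and follows essentially the same approach as the paper: both reduce termination to Wilson's result that cycle popping on an irreducible chain terminates almost surely, and both derive the Steiner-tree property from the out-degree-one structure of the stack graph. Your write-up is in fact more careful than the paper's --- you explicitly isolate the ``no infinite popping sequence a.s.'' formulation needed to handle early stopping, and you additionally verify that every leaf of $\rsttree$ lies in $\terminals$, which the paper leaves implicit.
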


\begin{proof}
We first show that $\graph_\treeroot$ is a Steiner tree. 
Suppose some terminal node $u$ is connected to another connected subgraph 
$\graph_\treeroot'$ induced by the top entries of the stacks. 
Since $\graph_\treeroot'$ does not touch the current tree, 
root $r$ does not belong in $\graph_\treeroot'$. 
Thus, it must have $|\graph_\treeroot'|$ edges, 
which implies that it contains a cycle, 
which is a contradiction since cycle popping terminates when no $u \in X$ is involved in a cycle. 

We now show that $\algcp$ terminates with probability $1$. 
Following Wilson~\cite{wilson1996generating}, 
when the cycle-popping procedure is performed until no node in $V$ is involved in a cycle, 
i.e., without early termination, 
it returns a spanning tree with probability $1$. 
This directly implies that it must also return a Steiner tree 
in a finite number of steps since every Steiner tree of $\chainG$ 
is a subgraph of at least one spanning tree of $\chainG$. 
\end{proof}

Using similar arguments as by Propp and Wilson~\cite{propp1998get}, 
we now show that the cycle-popping procedure with early termination 
is equivalent to loop-erased random walk that operates on an arbitrary ordering of the terminal nodes. 
First, notice that, instead of generating infinitely long stacks and looking for cycles to pop, 
$\alglerw$ uses the principle of deferred decisions and generates stack elements when necessary. 
$\alglerw$ starts with an arbitrary terminal node $u \in X$, 
does a walk on the stacks so that the next node is given by the top of the current node's stack. 
If $\alglerw$ encounters a loop, 
then it has found a cycle in the stack graph induced by the stacks that $\alglerw$ generates. 
Erasing the loop is equivalent to popping this cycle. 
When the current tree (initially just the root $r$) is encountered, 
the cycle erased trajectory starting from node $u$ is added to the tree. 
This process is then repeated for the other terminals that are not yet added to the current tree.

Given the equivalence between the methods $\alglerw$ and $\algcp$, 
we now provide the proof of Theorem~\ref{theorem:lerw} by reasoning with the cycle-popping procedure. 

\begin{proof} (of Theorem~\ref{theorem:lerw})
It follows from Lemma~\ref{lemma:st} that when $\algcp$ terminates, the stack graph $\graph_\treeroot$ induced by the stack entries containing the root node $r$ is a Steiner tree. 
Now, given a collection of cycles $C$, let $\mathbf{C}$ denote the event that $C$ is the collection of cycles that $\algcp$ pops before it terminates. Similarly, let $\mathbf{T_X}$ denote the event that a fixed Steiner tree $T_X$ is produced by the algorithm. Finally, let $\mathbf{C} \wedge \mathbf{T_X}$ denote the event that the algorithm popped the cycles $C$ and terminated with $T_X$. Following Propp and Wilson~\cite{propp1998get}, 
the order of cycles being popped until termination and the tree returned upon termination are independent, 
hence, we have 
\begin{align*}
\proba{\mathbf{T_X}} &= \sum_{C} \proba{\mathbf{C} \wedge \mathbf{T_X}} \\
&= \proba{\mathbf{T_X}} \sum_{C} P(\mathbf{C}) \\
&\propto w(T_X),
\end{align*}
where the last line follows from the fact that $\proba{\mathbf{T_X}}$ 
is the probability that the edges traversed on $\graph_\treeroot$  
starting from each terminal to the root is equal to $\rsttree$, 
which is given by the stack entries containing root $r$ upon termination. 
\end{proof} 


\spara{Comparison.} Both \algcut and \alglerw rely on loop-erased random walk and
 run in time $\bigO(\tau)$, where $\tau$ is the \textit{mean hitting time} \cite{wilson1996generating}
 However, while \algcut, needs to visit all the nodes at least once, \alglerw is equipped with an early termination procedure, hence can terminate without having to visit all the nodes, making it more efficient than  \algcut in practice. Thus, we use \alglerw in our experiments.

%
%
 


\label{sec:sampling}

\section{Experimental evaluation}
\label{section:experiments}

In this section, we conduct an extensive empirical evaluation of the proposed approach and several baselines.

\subsection{Experimental setup}
\spara{Datasets.} We experiment on real-world graphs with both synthetic and real-world cascades. 
We use public benchmark graph datasets, in particular, 
\infectious,\footnote{http://konect.uni-koblenz.de/networks/sociopatterns-infectious}
\email,\footnote{http://networkrepository.com/ia-email-univ.php}
\student,\footnote{http://networkrepository.com/ia-fb-messages.php} 
\grqc,\footnote{http://snap.stanford.edu/data/ca-GrQc.html} and
\digg.\footnote{https://www.isi.edu/\textasciitilde{}lerman/downloads/digg2009.html}
The characteristics of these graphs are shown in Table~\ref{tbl:dataset}.
In addition, we perform a case study on a \lattice graph.



\begin{table}[t]
  \caption{Datasets}
  \label{tbl:dataset}
  \centering
  \begin{tabular}{lrrrr}
    \toprule
    Name         &   $|V|$   &    $|E|$     &  Assortativity \\  \midrule
    \infectious &     410   &     2\,765   &       0.0121   \\
    \email      &    1\,133 &     5\,451   &      -0.0007   \\
    \student    &    1\,266 &     6\,451   &      -0.0039   \\
    \grqc       &    4158   &    13\,428   &       0.1641   \\
    \digg       &  279\,631 &  1\,548\,131 &       0.0015   \\
    \bottomrule
  \end{tabular}
\end{table}

For all the datasets, except \digg, 
we generate synthetic cascades using the following diffusion models:
($i$) \si model with the infection probability $\beta$ set to 0.1;
($ii$) \ic model, where the infection probability on each edge is independently drawn from the uniform distribution $[0,1]$. 
For each graph that is originally undirected, we create directed copies of each undirected edge. 

For the \digg dataset, we experiment on real-world available cascades, which correspond to stories that propagate in the network.  In most of the cases, the activated nodes do not form a  connected component, hence, we extract the largest connected component as the cascade. We experiment using the top-10 largest cascades, with an average cascade size of $1868$.


\spara{Reconstruction methods.} We test and compare the following reconstruction methods.\footnote{Our implementation is available at \href{https://github.com/xiaohan2012/cascade-reconstruction-by-tree-samples}{github/xiaohan2012}}

\begin{squishlist}
\item {
    \ourmethod: our sampling-based method that returns, for each hidden node, the marginal probability of being in infected state. 
    We use \alglerw for the sampling of Steiner-trees and set the size of the Steiner-tree sample to $1000$.\footnote{We observed that increasing the sample size beyond $1000$ gives little marginal gain in accuracy.}
For the case when the real infection source is unknown, we use different root selection strategies to be used as a proxy to the real infection source. Obviously, the accuracy of our sampling approach is affected by the choice of root node. We discuss the different root selection strategies employed in the next sub-section.}
\item {
    \netfill: a method designed specifically for the \si model by Sundareisan et al.~\cite{sundareisan2015hidden}, which assumes a single propagation probability $\beta$ common to all edges. 
    This method additionally assumes that the fraction of observed infections are known.
    For completeness, we also test \netfill using \ic cascades where we set $\beta = 0.5$, 
    corresponding to the mean of the infection probabilities drawn from the uniform $[0, 1]$ distribution.}
\item {
    \pagerank: a baseline that ranks the nodes based on their Personalized PageRank scores~\cite{bahmani2010fast}, with the personalization vector initialized to $1/\abs{\obs}$ for the observed nodes and $0$ for the unobserved nodes. }
\item {
    \mintree: a baseline that constructs the minimum-weight Steiner tree where the weight of each edge $(u,v)$ is defined as $-\log p_{uv}$. Note that, this baseline is inspired by the approach in \cite{lappas2010finding} for constructing a Steiner tree with the minimum negative log-likelihood. Notice that, this baseline can also be considered as the \emph{time-agnostic} version of the minimum Steiner tree based reconstruction approach proposed in \cite{xiao2018reconstructing}. }
\end{squishlist}

\spara{Root selection.} we experiment with the following root-selection methods: 
\begin{squishlist}
\item {\mindistroot selects the ``centroid'' node with minimum weighted shortest path distance to the terminals, 
where the weight on each edge $(u,v)$ is defined as $-\log p_{uv}$.} 
\item {\pagerankroot chooses the node with the highest Personalized PageRank score as the root, where the personalization vector is initialized to $1/\abs{\obs}$ for observed nodes and to $0$ for unobserved nodes. }
\item {\trueroot is an oracle that simply returns the true infection source $s$. We consider this case to eliminate the effect of root selection on the final performance}
\end{squishlist}
We observe that, while \pagerankroot is more scalable than \mindistroot, \mindistroot gives slightly better performance in practice. 

\spara{Evaluation measure.} We evaluate the performance of the algorithms not only for recovering the nodes, but also recovering the edges: ideally a good method should discover many hidden infected nodes, along with the edges that the infection propagated. 

We use \emph{average precision} (\ap) to evaluate the quality of different reconstruction methods. 
The \ap measure summarizes a precision-recall curve as the weighted mean of precision achieved at each threshold, 
with the increase in recall from the previous threshold used as the weight. In particular, 
\[
\ap = \sum_n (R_n - R_{n-1}) P_n,
\]
where $P_n$ and $R_n$ is the precision and recall at the $n$-th thres\-hold, respectively.
The \ap measure is widely used for evaluating the accuracy of information retrieval (IR) tasks, where giving higher ranking to a small set of relevant documents against non-relevant ones is the primary goal. Notice that our task is very similar to an IR task, since in real-world scenarios, cascades tend to be small with respect to graph size, thus, a good reconstruction method should give high ranking to the actual hidden infected nodes (and the edges).  By convention, we exclude all observed nodes from evaluation. For each experiment setting, we report the results of experiments averaged over $100$ runs. 



\subsection{Case study on a lattice graph}

We perform a case study on simulated \si cascades on a 32$\times$32 lattice graph, 
where 20\% of the nodes are infected, among which 50\% are observed. Figure~\ref{fig:case-study} illustrates the behavior of different reconstruction algorithms by a color-coding of the predictions they produce. In contrast to \ourmethod that provides, for each hidden node, the probability to be infected, \netfill and \mintree can only produce binary predictions, which is reflected by their binary colored output. Notice that \ourmethod correctly assigns high probability to the actual hidden infections, which is shown by the assignment of dark red color to the actual infection region. We also observe that \netfill finds considerable amount of true positives \truepos, but also returns quite a few false positives \falsepos and false negatives \falseneg. Finally, \mintree, by design, produces small trees, thus, producing more false negatives \falseneg than the other two methods. 

%

\begin{figure*}[t]
  \centering  
    \newcommand{\includefig}[1]{\includegraphics[width=0.23\linewidth]{figs/case-study/#1.png}}
  \newcommand{\includelegend}[1]{\includegraphics[width=0.8\linewidth]{figs/case-study/#1}}
  \setlength{\figwidth}{4.0cm}
  \setlength{\tabcolsep}{3pt}
  \begin{tabular}{cccc}
    cascade with observation & \ourmethod & \netfill & \mintree \\
    \includefig{cascade} & \includefig{{our-min_dist}} & \includefig{{netfill}} & \includefig{{min-steiner-tree}}\\
    \multicolumn{2}{l}{\includegraphics[width=0.4\linewidth]{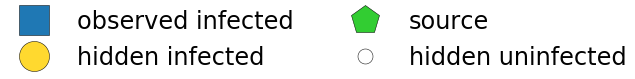}} & 
                                                                                                         \multicolumn{2}{c}{\includegraphics[width=0.4\linewidth]{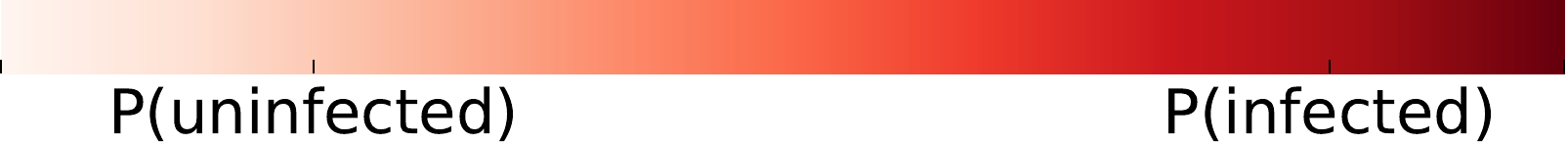}} \\
  \end{tabular}  
  \caption{
    Case study on \lattice graph. 
    Underlying cascade and observation is shown in left most figure.
    Output by different methods are shown in the remaining figures.
    Note that \ourmethod outputs node infection probability while \netfill and \mintree make binary decision (infected or not).
  }
  \label{fig:case-study}
\end{figure*}

\subsection{Node-level performance on synthetic cascades}
\label{subsec:node-level}

Next, we evaluate node-level \ap scores in different settings.
We experiment with various values of the following three parameters:
\begin{squishlist}
\item \emph{cascade generation model}: either \ic or \si;
\item \emph{observation fraction}: the fraction of the number of observed infections by the number of true infected nodes;
\item \emph{cascade fraction}: the fraction of the number of true infected nodes by the total number of nodes in the graph. 
\end{squishlist}

\spara{Effect of observation fraction.}
We fix cascade fraction to $0.1$ and vary the observation fraction from $0.1$ to $0.9$ at step size $0.1$.
Figure~\ref{fig:node-by-obs} shows node-level \ap scores under different graphs and cascade models. We see that \ourmethod (with both \mindistroot and \trueroot) and \pagerank
outperform \netfill\footnote{We are not able to run \netfill with observation fraction $<0.5$ using the implementation provided by the original authors.}
and \mintree by a large margin.
This result is consistent on all graphs we experiment and cascade models.
Notice that, though \pagerank is a simple heuristic, it is among the top-performing methods, 
even outperforming \ourmethod with \trueroot on \grqc by as much as $0.1$. 
We also see that, the effect of root selection plays an important role for most graphs.
For example, on all graphs except \grqc, with \trueroot, \ourmethod is the best,
while  with \mindistroot, \ourmethod falls behind \pagerank. 
Finally, as the observation fraction increases, all methods' performance tends to drop.
This is expected as more infections are observed, 
there are fewer hidden infections, which makes the task harder. 

\begin{figure*}[t]
  \centering  
  \newcommand{\includefig}[1]{\includegraphics[width=0.24\linewidth]{figs/cmp-baseline-obs-fractions/#1.pdf}}
  \newcommand{\includelegend}[1]{\includegraphics[width=0.8\linewidth]{figs/#1}}
  \setlength{\figwidth}{4.0cm}
  \setlength{\tabcolsep}{3pt}
  \begin{tabular}{cccc}
    \infectious  & \email & \student & \grqc \\
    \includefig{infectious-mic} & \includefig{email-univ-mic} & \includefig{fb-messages-mic} & \includefig{grqc-mic} \\
    \includefig{infectious-msi} & \includefig{email-univ-msi} & \includefig{fb-messages-msi} & \includefig{grqc-msi} \\
    \multicolumn{4}{c}{\includelegend{{method_legend.pdf}}} 
  \end{tabular}  
  \caption{Node-level \ap score with respect to the fraction of observation: \textbf{top row} (\ic model), \textbf{bottom row} (\si model)}
  \label{fig:node-by-obs}
\end{figure*}

\spara{Effect of cascade fraction.}
We fix observation fraction to $0.5$ and vary the cascade fraction from $0.1$ to $0.5$ at step size $0.1$.
Figure~\ref{fig:node-by-cas} shows node \ap scores under different graphs and cascade models.
Again, \netfill and \mintree are out-performed by other methods.
For \ourmethod and \pagerank, \ap scores are very close except on \grqc, where \pagerank outperforms the others.
Also, as cascade fraction increases, all methods tend to give better \ap scores.

\begin{figure*}[t]
  \centering  
  \newcommand{\includefig}[1]{\includegraphics[width=0.24\linewidth]{figs/cmp-baseline-cascade-fractions/#1.pdf}}
  \newcommand{\includelegend}[1]{\includegraphics[width=0.8\linewidth]{figs/#1}}
  \setlength{\figwidth}{4.0cm}
  \setlength{\tabcolsep}{3pt}
  \begin{tabular}{cccc}
    \infectious  & \email & \student & \grqc \\
    \includefig{infectious-mic} & \includefig{email-univ-mic} & \includefig{fb-messages-mic} & \includefig{grqc-mic} \\
    \includefig{infectious-msi} & \includefig{email-univ-msi} & \includefig{fb-messages-msi} & \includefig{grqc-msi} \\
    \multicolumn{4}{c}{\includelegend{{method_legend.pdf}}} 
  \end{tabular}  
  \caption{Node-level \ap score with respect to cascade size: \textbf{top row} (\ic model), \textbf{bottom row} (\si model)}
  \label{fig:node-by-cas}
\end{figure*}

\spara{Discussion on \grqc graph.}
On the \grqc graph, \pagerank performs better than \ourmethod with \trueroot while this is not the case in other graphs. 
We now provide some remarks to explain this difference in performance for the \grqc graph. 
First of all, we observe that \grqc has a very high assortativity coefficient~\cite{newman2002assortative} (AC) compared to the other graphs (as shown in Table~\ref{tbl:dataset}).
In graphs with large assortativity coefficient, high degree nodes tend to connect to each other, which implies that, the infected nodes in simulated cascades tend to form densely connected subgraphs. Notice that the random walker of \pagerank tends to give higher scores to the cascade subgraph if it's densely connected, which gives prediction advantage to \pagerank. In contrast, if the graph has low AC, the cascade subgraph tends to be sparsely connected, making \pagerank less effective. 

\subsection{Edge-level performance on synthetic cascades}
Next, we evaluate edge-level \ap scores under the same setting as the node-level evaluation.
Similarly, we consider two variants of \ourmethod with \trueroot or \mindistroot, 
while here we focus on the \ic model --- \si model yields similar results. 
Note that in this case, \netfill and \pagerank are not applicable by design because neither produces prediction on the edges. 

Shown in Figure~\ref{fig:edge}, \ourmethod, regardless of root selection strategy, outperforms \mintree.
Also, we observe that the performance increases both with the observation fraction and the cascade fraction. 
Meanwhile, the effect of root selection does not play an important role as we observe that the performance with \ourmethod with \mindistroot or \trueroot are very close. 

\begin{figure*}[t]
  \centering  
  \newcommand{\includefigobs}[1]{\includegraphics[width=0.24\linewidth]{figs/edge-cmp-baseline-obs-fractions/#1.pdf}}
  \newcommand{\includefigcas}[1]{\includegraphics[width=0.24\linewidth]{figs/edge-cmp-baseline-cascade-fractions/#1.pdf}}
  \newcommand{\includelegend}[1]{\includegraphics[width=0.8\linewidth]{figs/#1}}
  \setlength{\figwidth}{4.0cm}
  \setlength{\tabcolsep}{3pt}
  \newcolumntype{D}{ >{\centering\arraybackslash} m{1cm} }
  \begin{tabular}{cccc}
    \infectious  & \email & \student & \grqc \\
    \includefigobs{infectious-mic} & \includefigobs{email-univ-mic} & \includefigobs{fb-messages-mic} & \includefigobs{grqc-mic} \\
    \includefigcas{infectious-mic} & \includefigcas{email-univ-mic} & \includefigcas{fb-messages-mic} & \includefigcas{grqc-mic} \\
    \multicolumn{4}{c}{\includelegend{{method_legend_edge.pdf}}} 
  \end{tabular}  
  \caption{Edge-level \ap score on \ic model with respect to the observation fraction (\textbf{top row}) and cascade fraction (\textbf{bottom row})}
  \label{fig:edge}
\end{figure*}

\subsection{Real cascades}
Figure~\ref{fig:digg} shows the experiment results on \digg dataset.%
\footnote{\netfill was not able to terminate within reasonable amount of time when using the implementation provided by the authors.}
We vary the observation fraction from $0.1$ to $0.8$ to recover only the hidden nodes since the dataset does not provide the edge-level propagation information. 
We observe that in \digg dataset, the \ap scores of all the methods are much lower than what is obtained for synthetic cascades. 
We also make the following observation:
first, the \ap scores of all methods are much lower than synthetic cascades in above sections. The reason can be two-fold: (1) the underlying infection propagation probabilities on the edges are unknown;
(2) the size of the cascade is very small (on average only 0.7\% of the whole graph is infected), making the prediction task much harder. 
We also observe that, \ourmethod (both with \pagerankroot and \trueroot) perform the best in such difficult setting, followed by \pagerank, and \mintree which performs the worst. 
Notice that \digg has assortativity coefficient close to zero and the fact \ourmethod outperforms \pagerank supports our remarks provided in Section~\ref{subsec:node-level}.


\begin{figure}[t]
  \centering  
  \includegraphics[width=0.6\linewidth]{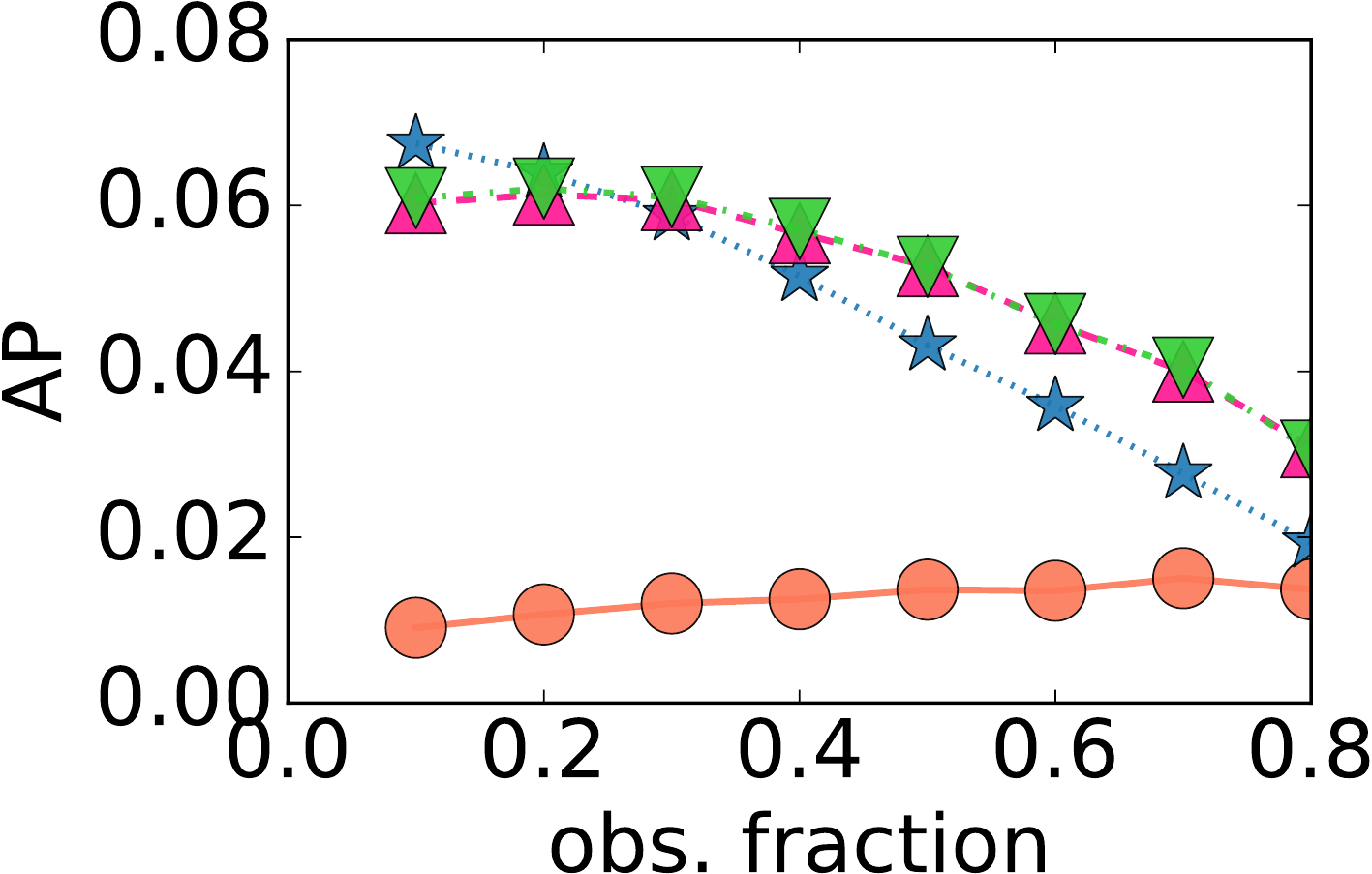} \\
  \vspace{0.5em}
  \includegraphics[width=0.66\linewidth]{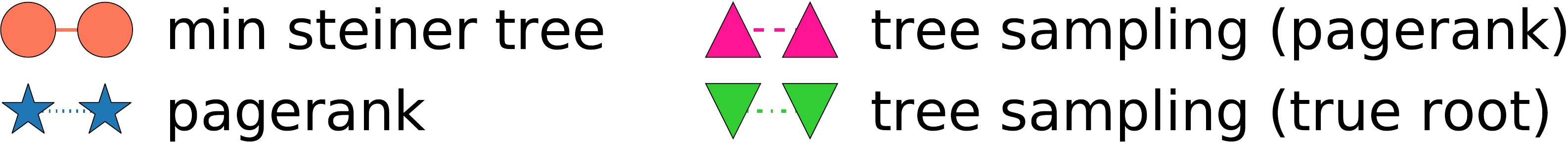}
  \caption{Node-level \ap score with respect to observation fraction on real cascades from \digg}
  \label{fig:digg}
\end{figure}

\label{sec:experiment}

\section{Conclusion}

We study the problem of cascade-reconstruction in the probabilistic setting.
To estimate node infection probability, we reduce our problem to sampling Steiner trees.
We propose two novel algorithms with provable guarantees 
on the distribution of the sampled Steiner trees.
The proposed reconstruction algorithm makes fewer assumptions (e.g., cascade model and observation fraction)
compared to previous work. 
This makes our approach more robust in practice.
Experimental results show that the proposed approach outperforms the other baselines on most cases. 

Our work opens interesting directions for future research. 
First, it is important to re-visit Problem~\ref{prob:rooted} and design sampling algorithms that sample \textit{directly} from $\prod_{(u, v) \in \tree} p_{uv}$. 
Second, it is interesting to consider the problem of reconstructing cascades in the presence of temporal information, 
where we need to sample Steiner trees with node-order constraints.

\label{sec:conclusion}


\section*{Acknowledgments}
This work has been supported by three Academy of Finland projects  (286211, 313927, and 317085), 
and the EC H2020 RIA project ``SoBigData'' (654024)". Part of the work was done while the second author was at ISI Foundation.

{
\balance
\bibliographystyle{IEEEtran}
\bibliography{references}

\begin{thebibliography}{10}
\providecommand{\url}[1]{#1}
\csname url@samestyle\endcsname
\providecommand{\newblock}{\relax}
\providecommand{\bibinfo}[2]{#2}
\providecommand{\BIBentrySTDinterwordspacing}{\spaceskip=0pt\relax}
\providecommand{\BIBentryALTinterwordstretchfactor}{4}
\providecommand{\BIBentryALTinterwordspacing}{\spaceskip=\fontdimen2\font plus
\BIBentryALTinterwordstretchfactor\fontdimen3\font minus
  \fontdimen4\font\relax}
\providecommand{\BIBforeignlanguage}[2]{{%
\expandafter\ifx\csname l@#1\endcsname\relax
\typeout{** WARNING: IEEEtran.bst: No hyphenation pattern has been}%
\typeout{** loaded for the language `#1'. Using the pattern for}%
\typeout{** the default language instead.}%
\else
\language=\csname l@#1\endcsname
\fi
#2}}
\providecommand{\BIBdecl}{\relax}
\BIBdecl

\bibitem{rozenshtein2016reconstructing}
P.~Rozenshtein, A.~Gionis, B.~A. Prakash, and J.~Vreeken, ``Reconstructing an
  epidemic over time,'' in \emph{KDD}.\hskip 1em plus 0.5em minus 0.4em\relax
  ACM, 2016, pp. 1835--1844.

\bibitem{braunstein2016network}
A.~Braunstein and A.~Ingrosso, ``Network reconstruction from infection
  cascades,'' \emph{arXiv preprint arXiv:1609.00432}, 2016.

\bibitem{sundareisan2015hidden}
S.~Sundareisan, J.~Vreeken, and B.~A. Prakash, ``Hidden hazards: Finding
  missing nodes in large graph epidemics,'' in \emph{SDM}.\hskip 1em plus 0.5em
  minus 0.4em\relax SIAM, 2015, pp. 415--423.

\bibitem{xiao2018reconstructing}
H.~Xiao, P.~Rozenshtein, N.~Tatti, and A.~Gionis, ``Reconstructing a cascade
  from temporal observations,'' \emph{SDM}, 2018.

\bibitem{wilson1996generating}
D.~B. Wilson, ``Generating random spanning trees more quickly than the cover
  time,'' in \emph{Proceedings of the twenty-eighth annual ACM symposium on
  Theory of computing}.\hskip 1em plus 0.5em minus 0.4em\relax ACM, 1996, pp.
  296--303.

\bibitem{lappas2010finding}
T.~Lappas, E.~Terzi, D.~Gunopulos, and H.~Mannila, ``Finding effectors in
  social networks,'' in \emph{KDD}.\hskip 1em plus 0.5em minus 0.4em\relax ACM,
  2010, pp. 1059--1068.

\bibitem{shah:11:culprit}
D.~Shah and T.~Zaman, ``Rumors in a network: Who's the culprit?'' \emph{IEEE},
  vol.~57, no.~8, pp. 5163--5181, 2011.

\bibitem{farajtabar2015back}
M.~Farajtabar, M.~G. Rodriguez, M.~Zamani, N.~Du, H.~Zha, and L.~Song, ``Back
  to the past: Source identification in diffusion networks from partially
  observed cascades,'' in \emph{Artificial Intelligence and Statistics}, 2015,
  pp. 232--240.

\bibitem{mosbah1999non}
M.~Mosbah and N.~Saheb, ``Non-uniform random spanning trees on weighted
  graphs,'' vol. 218, no.~2.\hskip 1em plus 0.5em minus 0.4em\relax Elsevier,
  1999, pp. 263--271.

\bibitem{broder1989generating}
A.~Broder, ``Generating random spanning trees,'' in \emph{Foundations of
  Computer Science, 1989., 30th Annual Symposium on}.\hskip 1em plus 0.5em
  minus 0.4em\relax IEEE, 1989, pp. 442--447.

\bibitem{propp1998get}
J.~G. Propp and D.~B. Wilson, ``How to get a perfectly random sample from a
  generic markov chain and generate a random spanning tree of a directed
  graph,'' \emph{Journal of Algorithms}, vol.~27, no.~2, pp. 170--217, 1998.

\bibitem{hauptmann2015compendium}
M.~Hauptmann and M.~Karpinski, ``A compendium on steiner tree problems,'' 2015.

\bibitem{kempe2003maximizing}
D.~Kempe, J.~Kleinberg, and {\'E}.~Tardos, ``Maximizing the spread of influence
  through a social network,'' in \emph{Proceedings of the ninth ACM SIGKDD
  international conference on Knowledge discovery and data mining}, 2003, pp.
  137--146.

\bibitem{valiant1979complexity}
L.~G. Valiant, ``The complexity of enumeration and reliability problems,''
  vol.~8, no.~3.\hskip 1em plus 0.5em minus 0.4em\relax SIAM, 1979, pp.
  410--421.

\bibitem{tears}
A.~F. Smith and A.~E. Gelfand, ``Bayesian statistics without tears: a
  sampling--resampling perspective,'' \emph{The American Statistician},
  vol.~46, no.~2, pp. 84--88, 1992.

\bibitem{tutte1984graph}
W.~T. Tutte, ``Graph theory, volume 21 of encyclopedia of mathematics and its
  applications,'' 1984.

\bibitem{bahmani2010fast}
B.~Bahmani, A.~Chowdhury, and A.~Goel, ``Fast incremental and personalized
  pagerank,'' \emph{VLDB}, vol.~4, no.~3, pp. 173--184, 2010.

\bibitem{newman2002assortative}
M.~E. Newman, ``Assortative mixing in networks,'' \emph{Physical review
  letters}, vol.~89, no.~20, p. 208701, 2002.

\end{thebibliography}
}

\end{document}